\theoremstyle{plain}
\newtheorem{thm}{\protect\theoremname}
\theoremstyle{plain}
\newtheorem{prop}[thm]{\protect\propositionname}
\theoremstyle{plain}
\newtheorem{lem}[thm]{\protect\lemmaname}
\theoremstyle{plain}
\newtheorem{cor}[thm]{\protect\corollaryname}
\theoremstyle{plain}
\newtheorem*{prop*}{\protect\propositionname}
\theoremstyle{plain}
\newtheorem*{lem*}{\protect\lemmaname}
\providecommand{\corollaryname}{Corollary}
\providecommand{\lemmaname}{Lemma}
\providecommand{\propositionname}{Proposition}
\providecommand{\theoremname}{Theorem}
\newcommand{\dx}{{dx\over(2\pi)^3}}
\begin{document}

\title{An Infinitesimal Quantum Group Underlies Classical Fluid Mechanics} 
\author{S. G. Rajeev}
\affiliation{
Department of Physics and Astronomy\\
Department of Mathematics\\ 
University of Rochester\\
Rochester,NY 14618,USA}
\begin{abstract}

Arnold showed that the Euler equations of an ideal fluid describe geodesics in the Lie
algebra of incompressible vector fields.  We will show that 
 helicity induces a splitting of the Lie algebra into two isotropic subspaces, forming a 
 Manin triple. Viewed another way, this shows that there is an infinitesimal quantum 
 group (a.k.a. Lie bi-algebra) underlying classical fluid mechanics.

\end{abstract} 

\maketitle

Infinite Dimensional Lie algebras are important in physics as symmetries of
 field theories, such as the standard model of particle physics. The best understood 
 cases are $1+1$ dimensional theories, such as the Kac--Moody\cite{Kac} and  Virasoro algebras. The latter also arises in fluid mechanics in one spatial 
dimension\cite{RajeevFluid}: being essentially the Lie algebra of vector fields on a circle it is the phase space of a  fluid with periodic boundary conditions. By now, its geometry\cite{BowickRajeev},representation theory\cite{Kac}  and physical meaning are all quite understood. The correlation functions of many conformal field theories (and related critical statistical models) can be  found exactly using the representation  theory\cite{FQS, BPZ}
  of the Virasoro and Kac-Moody  algebras.

A similar theory of Lie algebras that arise from $3+1$ dimensional field theories would
be of great interest,as they would apply to more realistic physical systems.  A good candidate is the Lie algebra $\mathcal{S}$ of incompressible (i.e.,satisfying $\nabla\cdot u=0$ ) vector fields. Arnold\cite{Arnold} showed that the Euler equations describe geodesics on it; its  representations (more generally ``quantum" deformations) are likely to be useful in determining the correlation functions of velocity in a turbulent fluid.  There are many analogies with the Virasoro and Kac-Moody algebras: $\mathcal{S}$ is a graded Lie algebra, with a non-trivial central extension
$\hat{\mathcal{S}}$. It is known that $\mathcal{S}$ admits an invariant inner product
(related to helicity in fluid mechanics). We will show in this paper that it can be
extended to $\hat{\mathcal{S}}$, as for Kac-Moody algebras. Moreover, we will show that
$\hat{\mathcal{S}}$ is a Lie bi-algebra. This opens up an as
yet unsolved problem: exponentiate $\hat{\mathcal{S}}$ to a ``quantum group" (more
precisely, a Hopf algebra). This could generalize  to three dimensions of the symplectic integration methods available in two dimensions\cite{McLachlan}.

The commutator $[u,w]=u\cdot \nabla w-w\cdot\nabla u$ is anti-symmetric and satisfies the
Jacobi identity.
The identity $[u,w]=\nabla\times\left(w\times u\right)+w\nabla.u-u\nabla.w$ shows that
the commutator of
incompressible vector fields is again incompressible; i.e., they form a Lie algebra
$\mathcal{S}$. It shows
a bit more: that the commutator is  exact; i.e., the curl of some vector field. The
exact vector fields
form a sub-algebra (indeed an ideal) $\mathcal{S}'\subset \mathcal{S}$.

We will impose periodic boundary conditions; i.e., space will be a torus $\mathbb{T}^3$
of side $2\pi$. A
simple argument using Fourier analysis shows that, on $\mathbb{T}^3 $, any vector field
with zero average is
exact (i.e., the curl operator is invertible in $\mathcal{S}'$). So, any vector field can
be written as the
sum of a constant vector field (its average $\bar{u}=\int u{dx\over (2\pi)^3}$) and an
exact vector field
$u'=u-\bar{u}$. Since constant vector fields commute, we get a semi-direct product
$\mathcal{S}=\mathbb{R}^3\ltimes \mathcal{S}'$.

$\mathcal{S}'$ has an invariant inner product $\langle u',w'\rangle=\int
u'.\mathrm{curl}^{-1}w'{d^{3}x\over
(2\pi)^3}$ .Symmetry follows by integration by parts. To see the invariance, use the
commutator identity
above to write $\langle u',[v',w']\rangle=\int u'.(v'\times w'){d^{3}x\over (2\pi)^3}$
and use the
anti-symmetry of the triple scalar product to get $
\langle\left[v',u'\right],w']\rangle+\langle
u',[v',w']\rangle=0$. It is annoying that this inner product does not extend to the full
algebra
$\mathcal{S}$, because curl is not invertible there.

There is a central extension $\mathbb{R}^3\to \hat{\mathcal{S}}\to\mathcal{S}$
(discovered in a different
context\cite{CentralExtension} ): \[
\left[(u,\eta),(w,\mu)\right]=\left([u,w],\Omega(u,w)\right) \] with
co-cycle $\Omega(u,w)=\int w\times u{d^{3}x\over (2\pi)^3}$. By pairing the constant
vectors with the
translations (by analogy with the Kac-Moody algebra\cite{Kac}), we can get an invariant
inner product on the
extended Lie algebra $\hat{\mathcal{S}}$: \[ \langle(u,\eta),(w,\mu)\rangle= \int
u'.\mathrm{curl}^{-1}w'{d^{3}x\over (2\pi)^3} +\eta. \bar{w}+\mu.\bar{u}. \]

This central extension serves a useful physical purpose: it allows us to extend the
geodesic interpretation
of Euler's equation to include translations. Recall\cite{RajeevMechanics} that the Euler equations of an
incompressible ideal
fluid\[ \frac{\partial v}{\partial t}+v.\nabla v=-\nabla p,\quad\nabla.v=0 \] can be
written also as
$\frac{\partial v}{\partial t}+\omega\times v=-\nabla\left(p+\frac{1}{2}v^{2}\right),$
where
$\omega=\mathrm{curl}\ v$ is the vorticity. By taking the curl we get the vorticity form
of these equations
\[ \frac{\partial\omega}{\partial t}+[v,\omega]=0,\quad\omega=\nabla\times v \] But this
is incomplete: we
also need to know how the constant part of $v$ (which is lost in $\nabla\times v$)
evolves in time. By
averaging Euler's equation we also get
\[
 \frac{\partial\bar{v}}{\partial t}+\int\omega\times v {d^{3}x\over
(2\pi)^3}=0. 
\] This suggests that we should regard the $\mathrm{curl}$ operator as
extended to the central
extension by $\mathrm{curl}:\mathcal{S}\to\hat{\mathcal{S}'}$, \[
\mathrm{curl}v=\left(\nabla\times v,\int v
{d^{3}x\over (2\pi)^3}\right)\equiv\hat{\omega}, \] So velocity belongs to $\mathcal{S}$
(i.e., includes
constant vectors) . But vorticity belongs to $\hat{\mathcal{S}'}$: total vorticity $\int
\omega {dx\over
(2\pi)^3}$ is zero, but extended vorticity $\hat{\omega}$ has a central component equal
to average velocity.
The extended notion of curl above is an invertible operator.

The Euler equation now becomes \[ \frac{\partial\hat{\omega}}{\partial
t}+\left[v,\hat{\omega}\right]=0 \]
We can verify that \[ \langle\hat{\omega},\mathrm{curl}^{-1}\hat{\omega}\rangle=\int
v^{2}{dx\over (2\pi)^3}
\]
allowing us to interpret the total kinetic energy as
$H=\frac{(2\pi)^3}{2}\langle\hat{\omega},\mathrm{curl}^{-1}\hat{\omega}\rangle$.

There is a close analogy to the rigid body: $\hat{\omega}$ is like the angular momentum
$L$ and
$\mathrm{curl}$ is analogous to moment of inertia. Moreover,
$\langle\hat{\omega},\hat{\omega}\rangle$ is
analogous to $L\cdot L$; being invariant under the Lie algebra, it is a conserved quantity.
In fluid
mechanics,this is called helicity\cite{ArnoldKhesin,Helicity}.

Thus, fluid mechanics should be thought of as geodesic motion in the \emph{extended} Lie
algebra
$\mathcal{S}'$. The central elements are trivially time-independent. It is amusing that
total momentum of
the fluid $\int vd^{3}x$ is not conserved: it has an ``anomaly'' proportional  to the
cocycle $\Omega(v,\omega)$.

Since parity changes its sign, the invariant inner product $\langle,\rangle$ is not
positive. This raises
the possibility that there is a splitting $\hat{\mathcal{S}}=\mathcal{A}\oplus\mathcal{B}$
into isotropic
subalgebras; i.e., such that $\langle
\mathcal{A},\mathcal{A}\rangle=0=\langle\mathcal{B},\mathcal{B}\rangle$. The meaning of
such a ``Manin
triple"\cite{Ogievetsky,QuantumGroupBooks} $(\hat{\mathcal{S}},\mathcal{A},\mathcal{B}$)
is clearer in a
basis $X_a\in \mathcal{A}$ and $X^a\in\mathcal{B}$ satisfying $\langle
X_a,X_b\rangle=0=\langle
X^a,X^b\rangle,\langle X_a,X^b\rangle=\delta_a^b$ and \[
[X_{a},X_{b}]=\Gamma_{ab}^{c}X_{c},\quad
[X^{a},X^{b}]=\mu_{c}^{ab}X_{c} \] \[
[X_{a},X^{b}]=-\Gamma_{ad}^{b}X^{d}+\mu_{a}^{bd}X_{d} \] These
relations define a Lie bi-algebra\cite{QuantumGroupBooks,Ogievetsky};i.e., infinitesimal
versions of quantum
groups (Hopf algebras) which are also called Poisson-Hopf algebras. It is expected that
every Lie bi-algebra
(even infinite dimensional ones) are infinitesimal versions of Hopf algebras. There is a
general procedure
for exponentiating them in finite dimensions; but not yet in the general case.

To construct such a basis for $\hat{\mathcal{S}}$ we will use Fourier analysis and an
elementary trick using irrational numbers.  Let $\alpha$, $\beta$ and $\gamma\equiv\alpha\times\beta$ be constant
vectors satisfying the
conditions that for $m \in\mathbb{Z}'^{3}$, \[ \alpha\cdot
m\neq0,\quad\beta.m\neq0,\quad(\alpha\times\beta)\cdot m\neq0,. \] (Here
$\mathbb{Z}'^{3}$is the set of
non-zero 3-vectors with integer components). This ``no resonance'' condition implies that
\[ \alpha\times
m\neq0,\quad\beta\times m\neq0,\quad\gamma\times m\neq0 \] as well. For example, \[
\alpha=\left\{
1,\sqrt{2},\sqrt{3}\right\} ,\quad\beta=\left\{ \sqrt{3},1,\sqrt{2}\right\} \] leading to
$\gamma=\left\{2-\sqrt{3},3-\sqrt{2},1-\sqrt{6}\right\}$. (There are many other choices
also, leading to equivalent bases.) The orbits of $\alpha,\beta,\gamma$ are dense in $\mathbb{T}^3$, so that the only continuous functions that vanish on them are zero everywhere. This allows  us to invert operators such as $\mathrm{curl}$ and $\gamma\cdot\nabla$ subject to periodic boundary conditions. Such ideas occur also in the proof of the KAM theorem of classical mechanics\cite{RajeevMechanics}.

It is not difficult to verify that for $m\neq0$, we can expand any vector satisfying
$m.w=0$ as $
w=\frac{m\cdot w\ \beta.m-m\cdot m\beta.w}{m\cdot m\ \gamma\cdot m}\alpha\times
m+\frac{-m\cdot w\
\alpha.m+m\cdot m\alpha.w}{m\cdot m\ \gamma\cdot m}\beta\times w$. Thus

\[ a_{m}=-\alpha\times\nabla e_{m},\quad b_{m}=-\beta\times\nabla e_{m} , \] (where
$e_m=e^{im\cdot x}$)
form a basis for $\mathcal{S}'$. To extend the basis to $\hat{\mathcal{S}}$ we add three
central elements
$c_j$
and the three translations along the co-ordinate axes $d_j$. We can calculate
\begin{widetext}
\begin{equation} [a_{m},a_{n}]=\alpha\cdot(m\times
n)a_{m+n},\quad[b_{m},b_{n}]=\beta\cdot(m\times
n)b_{m+n}\label{eq:aabb} \end{equation} \[
[a_{m},b_{n}]=\frac{\gamma.m}{\gamma\cdot(m+n)}\beta\cdot(m\times
n)a_{m+n}+\frac{\gamma\cdot n}{\gamma\cdot(m+n)}\alpha\cdot(m\times
n)b_{m+n}\label{ab}+\gamma\cdot n\
c.n\delta(m+n) \] \end{widetext} In addition,$[*,c_i]=0$ and $[d_j,d_k]=0$ and \[
\left[d_j,a_{m}\right]=im_ja_{m},\quad\left[d_j,b_{m}\right]=im_j b_{m}. \]

The invariant inner product becomes, 
\[ 
\langle a_{m},a_{n}\rangle=\langle a_m,c_j\rangle=0=\langle b_{m},b_{n}\rangle=
\langle b_m , d_j\rangle
 \]
\begin{equation}
\langle a_{m},b_{n}\rangle=i\gamma\cdot n\delta_{m+n},\quad \langle
d_j,c_k\rangle=\delta_{jk}\label{eq:IP} 
\end{equation}

We can change the basis slightly to make the Lie bi-algebra structure more obvious: \[
X_{m}=a_{m},\quad
X^{n}=-\frac{1}{i\gamma.n}b_{-n} \] so that $\langle X_{m},X^{n}\rangle=\delta_{m}^{n}$.
The commutation
relations in this basis are of the required type with $
\Gamma_{mn}^{k}\equiv\delta_{m+n}^{k}\alpha.(m\times
n),
\mu_{k}^{mn}\equiv\delta_{k}^{m+n}\frac{(-i\gamma.k)}{(i\gamma.m)(i\gamma.n)}\beta.(m\times
n). $  Then $\mathcal{A}$ is spanned by $X_a=(X_m,c_j)$ and $\mathcal{B}$ by $X^a=(X^m,d_j)$ .

The $L^2$ inner product is
 \[ (a_m,a_n)=(\alpha\times m)^2\delta(m+n),
  \]
\[
(b_m,b_n)=(\beta\times m)^2\delta(m+n),
 \] 
\[(a_m,b_n)=(\alpha\times m).(\beta\times
n)\delta(m+n),
 \] 
 \[
 (c_j,c_k)=\delta_{jk}=(d_j,d_k).
 \]
 
 The
fluid flows along geodesics determined by this metric along with the commutation
relations(\ref{eq:aabb}).

$\hat{\mathcal{S}}$ is not a co-boundary Lie bi-algebra; i.e., there is no classical $r$-matrix such that $\mu=\partial r$ in the Lie algebra co-homology of $\mathcal{A}$. This makes it harder to construct the quantum group: it is not determined by an $R$-matrix.

I thank  Mark Bowick and Jaemo Park  for discussions. Some of this work was done during a visit to the KITP (Santa Barbara) whose research was supported in part by the National Science Foundation under Grant No. NSF PHY-1748958.

\vfill\pagebreak

\section{ Details of Proofs}

\subsection{The Central Extension }
\begin{prop}
Each component of $\Omega(u,w)=\int w\times u \dx$ is a 2-cocycle of
$\mathcal{S}$. i.e.,$\partial\Omega(u,v,w)\equiv\Omega(u,[v,w])+\Omega(v,[w,u])+\Omega(w,[u,v])=0$
\end{prop}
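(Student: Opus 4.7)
The plan is to reduce the cocycle identity to a simple symmetry of a trilinear form, derived by a single integration by parts that uses both the divergence-free condition and the absence of boundary terms on $\mathbb{T}^3$. Since $\Omega$ is $\mathbb{R}^3$-valued, I would verify $\partial\Omega(u,v,w)=0$ componentwise.

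First I would expand the brackets via $[v,w] = (v\cdot\nabla)w - (w\cdot\nabla)v$ and introduce the shorthand $I(X,Y,Z) := \int ((X\cdot\nabla)Y)\times Z \,\dx$. Each $\Omega(\cdot,[\cdot,\cdot]) = \int [\cdot,\cdot]\times(\cdot)\,\dx$ then splits into two $I$-terms, and the full cyclic sum becomes a signed sum of six trilinear expressions:
\[
[I(v,w,u) - I(w,v,u)] + [I(w,u,v) - I(u,w,v)] + [I(u,v,w) - I(v,u,w)].
\]

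The key lemma is the symmetry $I(X,Y,Z) = I(X,Z,Y)$ whenever $X$ is divergence-free. Working in components, $I(X,Y,Z)_i = \int \epsilon_{ijk} X_l (\partial_l Y_j) Z_k \,\dx$; integrating the $\partial_l$ by parts (periodic boundary conditions on $\mathbb{T}^3$ kill the boundary term) and discarding the $\partial_l X_l$ piece yields $-\int \epsilon_{ijk} Y_j X_l (\partial_l Z_k)\,\dx$, which upon relabelling $j\leftrightarrow k$ (and using the antisymmetry of $\epsilon$) is precisely $I(X,Z,Y)_i$.

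Finally, I would apply this symmetry to group the six terms by their first slot: the $u$-pair reads $-I(u,w,v)+I(u,v,w) = -I(u,v,w)+I(u,v,w)=0$, and analogously the pairs beginning with $v$ and with $w$ each cancel. Hence $\partial\Omega=0$ componentwise. I do not foresee any real obstacle here; the only thing to watch is the bookkeeping of $\epsilon$-signs under the dummy-index relabelling.
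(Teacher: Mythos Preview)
Your proof is correct and is essentially the same argument as the paper's, just organized more cleanly. The paper performs the identical integration by parts (using incompressibility to drop the $\partial_l X_l$ term) but, rather than isolating your symmetry lemma $I(X,Y,Z)=I(X,Z,Y)$ and cancelling in pairs, it carries all six terms along and shows that the manipulation turns $-\partial\Omega$ into $\partial\Omega$, whence $\partial\Omega=0$; your packaging via the trilinear symmetry is the tidier of the two.
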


\begin{proof}
\begin{equation}
-\partial\Omega(u,v,w)=\int u\times\left(v^{i}\partial_{i}w-w^{i}\partial_{i}v\right)\dx+\mathrm{cyclic}
\end{equation}

Because of incompressibility,

\begin{equation}
=\int u\times\partial_{i}\left(v^{i}w-w^{i}v\right)\dx+\mathrm{cyclic}
\end{equation}

Integrating by parts 

\begin{equation}
=-\int\partial_{i}u\times\left(v^{i}w-w^{i}v\right)\dx+\mathrm{cyclic}
\end{equation}

\begin{equation}
=\int\left[w^{i}\partial_{i}u\times v-v^{i}\partial_{i}u\times w\right]\dx+\mathrm{cyclic}
\end{equation}

Cyclically permuting the first term

\begin{equation}
=\int\left[u^{i}\partial_{i}v\times w-v^{i}\partial_{i}u\times w\right]\dx+\mathrm{cyclic}
\end{equation}

\begin{equation}
=\int\left[u,v\right]\times w\ \dx+\mathrm{cyclic}
\end{equation}

anti-symmetry of cross product

\begin{equation}
=-\int w\times[u,v]\dx+\mathrm{cyclic}
\end{equation}

Cyclic permutation

\begin{equation}
=-\int u\times[v,w]\dx+\mathrm{cyclic}
\end{equation}

\begin{equation}
=\partial\Omega(u,v,w)
\end{equation}

Thus $\partial\Omega=0$.
\end{proof}

This allows us to define a central extension. The vector space 

\begin{equation}
\hat{\mathcal{S}}=\mathbb{R}^{3}\oplus\mathcal{S}
\end{equation}

is turned into a Lie algebra by 

\begin{equation}
\left[(u,\eta),(w,\mu)\right]=\left([u,w],\Omega(u,w)\right)
\end{equation}

The Jacobi identity follows from the co-cycle condition proved above.

\begin{prop}
The Lie algebra $\hat{\mathcal{S}}$ admits the invariant inner product
$\langle(u,\eta),(w,\mu)\rangle=\int(u-\bar{u}).\mathrm{curl}^{-1}(w-\bar{w})\dx+\eta.\bar{w}+\mu.\bar{u}$,
where $\bar{u}=\int u\dx$.
\end{prop}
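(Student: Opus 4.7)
The plan is to verify that the proposed formula is well-defined, symmetric, and $\mathrm{ad}$-invariant. Well-definedness is immediate from the earlier observation that on $\mathbb{T}^{3}$ the curl is invertible on $\mathcal{S}'$, so $\mathrm{curl}^{-1}(w-\bar w)\in\mathcal{S}'$ makes sense. Symmetry reduces to the self-adjointness of $\mathrm{curl}^{-1}$ on $\mathcal{S}'$ (a single integration by parts using $\nabla\cdot u'=\nabla\cdot w'=0$), together with the visible $(u,\eta)\leftrightarrow(w,\mu)$ symmetry of the pairing $\eta\cdot\bar w+\mu\cdot\bar u$ between the central and constant pieces.

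The substance is the invariance identity $\langle[(u,\eta),(v,\xi)],(w,\mu)\rangle + \langle(v,\xi),[(u,\eta),(w,\mu)]\rangle = 0$. First I would observe that, because of the identity $[u,v]=\nabla\times(v\times u)$ already used above, the commutator is automatically exact, so $\overline{[u,v]}=0$ and the terms proportional to $\mu\cdot\overline{[u,v]}$ and $\xi\cdot\overline{[u,w]}$ drop out. What survives is two bulk helicity integrals, $\int[u,v]\cdot\mathrm{curl}^{-1}w'\,\dx$ and $\int v'\cdot\mathrm{curl}^{-1}[u,w]\,\dx$, plus the two cocycle contributions $\Omega(u,v)\cdot\bar w$ and $\Omega(u,w)\cdot\bar v$.

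Next I would rewrite each helicity integral by using $[u,v]=\nabla\times(v\times u)$ and integrating by parts, so that $\mathrm{curl}^{-1}$ disappears in favour of the ``raw'' vector potential $v\times u$. Any gradient or additive constant by which this potential differs from the true $\mathrm{curl}^{-1}[u,v]\in\mathcal{S}'$ is annihilated when paired against the divergence-free, zero-average fields $v'$ and $w'$; but when the integrand is extended to the full $v$ or $w$, a constant piece $-\bar w\cdot\overline{v\times u}=-\bar w\cdot\Omega(u,v)$ survives (and similarly $-\bar v\cdot\Omega(u,w)$). These are exactly the terms needed to cancel the cocycle contributions. The remainder is $\int(v\times u)\cdot w\,\dx+\int v\cdot(w\times u)\,\dx$, which vanishes by antisymmetry of the scalar triple product.

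The step I expect to be trickiest is this last piece of bookkeeping: one must track carefully which ambiguities in $\mathrm{curl}^{-1}$ survive when paired against $w$ versus $w'$, so that the bulk helicity, the cocycle term, and the constant/central contributions all conspire to cancel in the correct combinations. Morally, however, this is not a coincidence: the cocycle $\Omega$ was \emph{defined} precisely so that the helicity form on $\mathcal{S}'$ extends invariantly to $\hat{\mathcal{S}}$, and the computation above is just the infinitesimal manifestation of that design. Non-degeneracy, if required, follows by testing $(u,\eta)$ successively against $(0,\mu)$, against constant fields $(\bar w,0)$, and against $(w',0)$ with $w'$ ranging over $\mathcal{S}'$.
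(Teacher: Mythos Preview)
Your proposal is correct and follows essentially the same route as the paper: use that $[u,v]=\nabla\times(v\times u)$ is exact (hence has zero average), trade $\mathrm{curl}^{-1}$ for the raw cross product, let the cocycle term absorb the constant piece, and finish with antisymmetry of the scalar triple product. The paper's version is marginally more economical in that it evaluates the single quantity $\langle(u,\eta),[(w,\mu),(v,\sigma)]\rangle$ and shows it collapses to $\int u\cdot(v\times w)\,\dx$, from which invariance is read off directly, rather than summing two terms as you do; but the content is identical.
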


\begin{proof}
We have
\begin{equation}
\langle(u,\eta),\left[(w,\mu),\left(v,\sigma\right)\right]\rangle=\langle(u,\eta),\left([w,v],\Omega(w,v)\right)\rangle
\end{equation}

Since $\overline{[w,v]}=0$ and $[w,v]=\mathrm{curl}(v\times w)$

\begin{equation}
=\int(u-\bar{u}).(v\times w)dx+\bar{u}.\int v\times w \dx
\end{equation}

\begin{equation}
=\int u.(v\times w)\dx.
\end{equation}

The anti-symmetry of the triple scalar product now proves the invariance
of the inner product, as before.
\end{proof}

Thus the central basis elements are dual to translations under the
invariant inner product.

\subsection{Fourier Basis}
\begin{lem}
For $m\neq0$, we can expand any vector as $w=\frac{m\cdot w}{m\cdot m}m+\frac{m\cdot w\ \beta.m-m\cdot m\beta.w}{m\cdot m\ \gamma\cdot m}\alpha\times m+\frac{-m\cdot w\ \alpha.m+m\cdot m\alpha.w}{m\cdot m\ \gamma\cdot m}\beta\times w$
\end{lem}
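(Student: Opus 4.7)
The plan is to show that under the no-resonance hypothesis the three vectors $\{m,\alpha\times m,\beta\times m\}$ form a basis of $\mathbb{R}^{3}$, and then recover the three coefficients by dotting the ansatz
$$w=Am+B(\alpha\times m)+C(\beta\times m)$$
against three well-chosen vectors. (I read the last ``$\beta\times w$'' in the statement as a typo for $\beta\times m$; the formula works with that correction.)

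For the basis step I will compute the scalar triple product using the BAC--CAB identity $a\times(b\times c)=b(a\cdot c)-c(a\cdot b)$, which gives
$$(\alpha\times m)\times(\beta\times m)=[(\alpha\times m)\cdot m]\,\beta-[(\alpha\times m)\cdot\beta]\,m=(\gamma\cdot m)\,m,$$
since $(\alpha\times m)\cdot\beta=-\gamma\cdot m$. Therefore $m\cdot[(\alpha\times m)\times(\beta\times m)]=(\gamma\cdot m)(m\cdot m)\neq0$ by the no-resonance condition, so the three vectors are linearly independent and hence span $\mathbb{R}^{3}$.

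To extract the coefficients, I will use that $\alpha\cdot(\alpha\times m)=\beta\cdot(\beta\times m)=m\cdot(\alpha\times m)=m\cdot(\beta\times m)=0$ while $\alpha\cdot(\beta\times m)=(\alpha\times\beta)\cdot m=\gamma\cdot m$ and $\beta\cdot(\alpha\times m)=-\gamma\cdot m$. Dotting the ansatz with $m$ kills the cross-product terms and yields $A=(m\cdot w)/(m\cdot m)$. Dotting with $\alpha$ kills the $B$-term and leaves $\alpha\cdot w=A(\alpha\cdot m)+C(\gamma\cdot m)$, so
$$C=\frac{\alpha\cdot w-A\,\alpha\cdot m}{\gamma\cdot m}=\frac{(m\cdot m)\,\alpha\cdot w-(m\cdot w)\,\alpha\cdot m}{(m\cdot m)(\gamma\cdot m)}.$$
Dotting with $\beta$ kills the $C$-term and yields analogously
$$B=\frac{(m\cdot w)\,\beta\cdot m-(m\cdot m)\,\beta\cdot w}{(m\cdot m)(\gamma\cdot m)}.$$
These are precisely the coefficients in the statement.

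There is no real obstacle: the no-resonance condition $\gamma\cdot m\neq0$ is exactly the non-degeneracy needed to invert the linear system, and all manipulations are triple-product bookkeeping. The only care required is tracking the sign in $\beta\cdot(\alpha\times m)=-\gamma\cdot m$, which is what produces the asymmetric-looking sign pattern between the $B$ and $C$ coefficients.
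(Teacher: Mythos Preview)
Your proof is correct and follows essentially the same approach as the paper: establish that $\{m,\alpha\times m,\beta\times m\}$ is a basis via the nonvanishing of the scalar triple product $(m\cdot m)(\gamma\cdot m)$, then recover the coefficients by dotting with $m$, $\beta$, $\alpha$. The only cosmetic difference is that you compute the triple product via the BAC--CAB identity whereas the paper writes out the $3\times3$ determinant directly; your observation that ``$\beta\times w$'' should read ``$\beta\times m$'' is also correct.
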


\begin{proof}
Note that $m,\alpha\times m,\beta\times m$ form a basis because 

$\det\left(\begin{array}{ccc}
m_{1} & m_{3}\alpha_{2}-m_{2}\alpha_{3} & m_{3}\beta_{2}-m_{2}\beta_{3}\\
m_{2} & m_{1}\alpha_{3}-m_{3}\alpha_{1} & m_{1}\beta_{3}-m_{3}\beta_{1}\\
m_{3} & m_{2}\alpha_{1}-m_{1}\alpha_{2} & m_{2}\beta_{1}-m_{1}\beta_{2}
\end{array}\right)=m\cdot m(\alpha\times\beta)\cdot m\neq0$. Expand $w=\xi_{1}m+\xi_{2}\alpha\times m+\xi_{3}\beta\times m$.
Taking scalar products with $m,\beta,\alpha$ we get 

\begin{equation}
m\cdot w=m.m\xi_{1},\quad\alpha.w=\alpha.m\xi_{1}+\gamma\cdot m\xi_{3},\quad\beta.w=\beta.m\xi_{1}-\gamma.m\xi_{2}
\end{equation}

Solving, we get 

\begin{equation}
\xi_{1}=\frac{m\cdot w}{m\cdot m},\quad\xi_{2}=\frac{m\cdot w\ \beta.m-m\cdot m\beta.w}{m\cdot m\ \gamma\cdot m},\quad\xi_{3}=\frac{-m\cdot w\ \alpha.m+m\cdot m\alpha.w}{m\cdot m\ \gamma\cdot m}
\end{equation}
\end{proof}
In particular, if $m.w=0$ 

\begin{equation}
w=-\frac{\beta.w}{\gamma\cdot m}\alpha\times m,+\frac{\alpha.w}{\gamma\cdot m}\beta\times m.
\end{equation}

\begin{prop}
Any incompressible trigonometric polynomial vector field can be written
as $u=\bar{u}+\alpha\times\nabla A+\beta\times\nabla B$ where $\bar{u}$
is a constant vector field and $A,B\in\mathcal{F}$. Moreover, $A$
and $B$ are unique up to additive constants.
\end{prop}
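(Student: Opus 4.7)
The plan is to Fourier-expand everything and apply the specialization of the preceding lemma mode by mode. Write the trigonometric polynomial as a finite sum $u = \bar{u} + \sum_{m \neq 0} u_m e_m$. Incompressibility becomes $m \cdot u_m = 0$ for every $m \neq 0$, so the second (specialized) form of the lemma gives
$$u_m = -\frac{\beta \cdot u_m}{\gamma \cdot m}(\alpha \times m) + \frac{\alpha \cdot u_m}{\gamma \cdot m}(\beta \times m),$$
where the denominator $\gamma \cdot m = (\alpha \times \beta) \cdot m$ is nonzero precisely by the no-resonance condition on $\alpha,\beta$.

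For existence, I would then define the trigonometric polynomials
$$A = \sum_{m \neq 0} \frac{i\,\beta \cdot u_m}{\gamma \cdot m}\, e_m, \qquad B = \sum_{m \neq 0} \frac{-\,i\,\alpha \cdot u_m}{\gamma \cdot m}\, e_m,$$
which lie in $\mathcal{F}$ because only finitely many $u_m$ are nonzero. Using $\alpha \times \nabla e_m = i(\alpha \times m)\,e_m$ and the analogous identity for $\beta$, direct substitution gives
$$\alpha \times \nabla A + \beta \times \nabla B = \sum_{m \neq 0} u_m e_m = u - \bar{u},$$
as required. (Reality of $u$ is preserved because $\gamma \cdot (-m) = -\gamma \cdot m$ and $u_{-m} = \overline{u_m}$ together imply $A_{-m} = \overline{A_m}$ and similarly for $B$.)

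For uniqueness, suppose $\bar{u} + \alpha \times \nabla A + \beta \times \nabla B = \bar{u}' + \alpha \times \nabla A' + \beta \times \nabla B'$. Averaging forces $\bar{u} = \bar{u}'$, since $\alpha \times \nabla A$ has zero mean. Matching Fourier coefficients at each $m \neq 0$ yields
$$i(A_m - A'_m)(\alpha \times m) + i(B_m - B'_m)(\beta \times m) = 0.$$
The determinant computed in the previous lemma shows that $\{m, \alpha \times m, \beta \times m\}$ is a basis of $\mathbb{R}^3$, so $\alpha \times m$ and $\beta \times m$ are linearly independent and hence $A_m = A'_m$, $B_m = B'_m$ for every $m \neq 0$. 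The coefficients $A_0, B_0$ are unconstrained, which is exactly the additive-constant ambiguity in the statement.

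There is no real obstacle: the whole argument reduces to mode-by-mode linear algebra using the preceding lemma, and the only place any subtlety enters is the invocation of the no-resonance condition to ensure $\gamma \cdot m \neq 0$ for all $m \in \mathbb{Z}'^3$, which is needed both to define $A$ and $B$ and to read off the uniqueness from the basis property of $\{m, \alpha \times m, \beta \times m\}$.
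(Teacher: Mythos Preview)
Your proof is correct and follows essentially the same approach as the paper: Fourier-expand, invoke the preceding lemma mode by mode (using the no-resonance condition to divide by $\gamma\cdot m$), and read off $A$ and $B$. Your argument is in fact more complete, since the paper does not spell out the uniqueness step or the reality check, whereas you supply both via the linear independence of $\alpha\times m$ and $\beta\times m$ established in the lemma.
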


\begin{proof}
Expand in a Fourier series

\begin{equation}
u(x)=\bar{u}+\sum_{m\in\mathbb{Z}'^{3}}u_{m}e^{im\cdot x},\quad m\cdot u_{m}=0
\end{equation}

where $\bar{u}$ is a constant vector. Using 
\begin{equation}
u_{m}=-\alpha\times m\ \frac{\beta\cdot u_{m}}{\gamma\cdot m}+\beta\times m\ \frac{\alpha\cdot u_{m}}{\gamma\cdot m}
\end{equation}

\begin{equation}
a_{m}=-i\alpha\times me_{m},\quad b_{m}=-i\beta\times me_{m}
\end{equation}
we get the expansion

\begin{equation}
u=u_{0}+\sum_{m\in\mathbb{Z}'^{3}}\left[-i\frac{\beta\cdot u_{m}}{\gamma\cdot m}a_{m}+i\frac{\alpha\cdot u_{m}}{\gamma\cdot m}b_{m}\right]
\end{equation}

Finally, since $\gamma\equiv\alpha\times\beta\neq0$, it is obvious
that $\alpha,\beta,\gamma$ form a basis for the constant vectors.

We can also write this as 

\begin{equation}
u=\bar{u}+\alpha\times\nabla A+\beta\times\nabla B
\end{equation}

where the scalar fields $A,B$ are given by 

\begin{equation}
A=(\gamma\cdot\nabla)^{-1}\beta\cdot(u-\bar{u}),\quad B=-(\gamma\cdot\nabla)^{-1}\alpha\cdot(u-\bar{u})
\end{equation}

Also, $\bar{u}$ is the average over the torus of $u$.
\end{proof}
The above decomposition shows that 
\begin{cor}
A constant vector field is incompressible, but is not the curl of
any vector field on the torus (i.e., is not exact). An incompressible
vector field whose average is zero can be written as the curl of another
such vector field.
\end{cor}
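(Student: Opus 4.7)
For the first assertion, a constant vector field $\bar{u}$ has $\nabla\cdot\bar{u}=0$, so incompressibility is immediate. To rule out exactness, I would integrate: if $\bar{u}=\nabla\times v$ for a smooth vector field $v$ on $\mathbb{T}^3$, then componentwise $\bar{u}_i=\epsilon_{ijk}\partial_j v_k$, and averaging over the torus gives $\int \bar{u}_i\,\dx =\epsilon_{ijk}\int \partial_j v_k\,\dx=0$ by periodicity. Hence a nonzero constant vector field cannot be written as the curl of any periodic field.

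For the second assertion, I would use the preceding proposition: since $\bar{u}=0$, we can write $u=\alpha\times\nabla A+\beta\times\nabla B$ for suitable $A,B\in\mathcal{F}$. Because $\alpha,\beta$ are constant vectors, the identity $\nabla\times(f\mathbf{c})=\nabla f\times\mathbf{c}$ yields $\alpha\times\nabla A=-\nabla\times(A\alpha)$, and similarly for $B$. Therefore $u=\nabla\times v_0$ with $v_0=-(A\alpha+B\beta)$, establishing exactness.

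It remains to arrange that the potential itself lies in $\mathcal{S}'$, i.e.\ is incompressible with vanishing average. The curl of $v_0$ is unchanged by adding a gradient or a constant vector, so I would set $v=v_0+\nabla\phi+c$ and choose $\phi,c$ to kill both obstructions. To make $v$ incompressible we need $\nabla^2\phi=-\nabla\cdot v_0$; on $\mathbb{T}^3$ this Poisson equation is solvable precisely when the right-hand side has zero average, which holds automatically because $\int \nabla\cdot v_0\,\dx=0$ by periodicity. Once $v_0+\nabla\phi$ is incompressible, its average is some constant $\bar{w}$, and choosing $c=-\bar{w}$ finishes the construction.

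The only nontrivial step is the gauge-fixing in the third paragraph, where invertibility of $\nabla^2$ on the zero-average part of $\mathcal{F}$ is used; this is precisely the Fourier fact already invoked earlier in the paper to invert $\mathrm{curl}$ on $\mathcal{S}'$. Everything else reduces to the decomposition proved in the preceding proposition and the vanishing of averages of derivatives on the torus.
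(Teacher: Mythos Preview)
Your argument is correct. For the first assertion you do exactly what the paper does: observe that the average of a curl vanishes by periodicity, so a nonzero constant cannot be exact.

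For the second assertion the paper's proof is extremely terse and does not spell out the construction of an incompressible, zero-average potential; you supply this via the decomposition $u=\alpha\times\nabla A+\beta\times\nabla B=\nabla\times v_0$ followed by a gauge-fixing step (solve a Poisson equation, then subtract the average). This works, but it is more machinery than necessary. The one-line alternative, which is presumably what the paper has in mind by ``Fourier analysis,'' is to invert $\mathrm{curl}$ mode by mode: for $m\neq 0$ and $m\cdot u_m=0$, set $v_m=\dfrac{i\,m\times u_m}{m\cdot m}$; then $im\times v_m=u_m$ and $m\cdot v_m=0$ automatically, so $v=\sum_{m\neq 0} v_m e_m$ already lies in $\mathcal{S}'$ without any gauge adjustment. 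Your route has the virtue of being coordinate-free and of making explicit the freedom in the potential, but the direct Fourier inversion is shorter and avoids invoking solvability of the Laplacian as a separate ingredient.
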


\begin{proof}
Fourier analysis shows that the average of $\nabla\times U$ for any
$U$ with periodic components is zero: the constant terms in $U$
have zero curl. 
\end{proof}
\begin{prop*}
In $\mathcal{S}$, we have the relations
\end{prop*}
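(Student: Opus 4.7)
The proposition (continuing the pattern of the earlier displayed commutators in the main text) should establish, for the Fourier basis $a_m=-\alpha\times\nabla e_m=-i(\alpha\times m)e_m$ and $b_m=-i(\beta\times m)e_m$ of $\mathcal{S}'$, the formulas
\[
[a_m,a_n]=\alpha\cdot(m\times n)\,a_{m+n},\quad [b_m,b_n]=\beta\cdot(m\times n)\,b_{m+n},
\]
\[
[a_m,b_n]=\frac{\gamma\cdot m}{\gamma\cdot(m+n)}\beta\cdot(m\times n)\,a_{m+n}+\frac{\gamma\cdot n}{\gamma\cdot(m+n)}\alpha\cdot(m\times n)\,b_{m+n},
\]
together with the (trivial) action of the translations $d_j$. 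My plan is to reduce everything to direct differentiation using $[u,w]=u\cdot\nabla w-w\cdot\nabla u$, and then to close the result back on the basis via the decomposition lemma already proved.

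First I would compute the building block $a_m\cdot\nabla v_n$ for $v_n$ of the form $V_n e_n$ with $V_n$ a constant vector. Since $\nabla$ acting on $e_n$ gives $in\,e_n$, one gets $a_m\cdot\nabla v_n=-i\bigl((\alpha\times m)\cdot n\bigr)V_n\,e_{m+n}=-i\,\alpha\cdot(m\times n)\,V_n\,e_{m+n}$, and similarly for $b_m$ with $\alpha\to\beta$. Substituting $V_n=-i(\alpha\times n)$ and antisymmetrising immediately yields $[a_m,a_n]=\alpha\cdot(m\times n)(-i)(\alpha\times(m+n))e_{m+n}=\alpha\cdot(m\times n)a_{m+n}$, and the analogous computation gives the $[b_m,b_n]$ relation. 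For the action of $d_j=\partial_j$ on $a_m$ or $b_m$ one differentiates $e_m$ to pull out the factor $im_j$.

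The real work is $[a_m,b_n]$. Carrying out the two terms produces a vector field proportional to $e_{m+n}$ of the shape $W=C_1(\beta\times n)+C_2(\alpha\times m)$, with $C_1=-\alpha\cdot(m\times n)$ and $C_2=+\beta\cdot(m\times n)$ (up to an overall $-i$ that is absorbed in $a_{m+n},b_{m+n}$). This is not manifestly in the span of $\alpha\times(m+n)$ and $\beta\times(m+n)$, so the result must be re-expanded in the basis $\{\alpha\times(m+n),\beta\times(m+n)\}$ of the plane orthogonal to $m+n$; this is exactly the specialised form of the decomposition lemma with $k=m+n$. Applying that lemma to $W$, computing $\alpha\cdot W$ and $\beta\cdot W$ via the triple product identities $\alpha\cdot(\alpha\times m)=0$, $\alpha\cdot(\beta\times n)=-\gamma\cdot n$, $\beta\cdot(\beta\times n)=0$, $\beta\cdot(\alpha\times m)=\gamma\cdot m$, and dividing by $\gamma\cdot(m+n)$ gives precisely the stated coefficients $\tfrac{\gamma\cdot m}{\gamma\cdot(m+n)}\beta\cdot(m\times n)$ and $\tfrac{\gamma\cdot n}{\gamma\cdot(m+n)}\alpha\cdot(m\times n)$.

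The main obstacle is purely algebraic bookkeeping: keeping careful track of the factors of $i$ and the sign conventions in the definitions of $a_m,b_m$, and confirming that the vector field $W$ is automatically tangent to the $(m+n)^\perp$ plane (which it must be, since $[a_m,b_n]\in\mathcal{S}'$ is incompressible). The latter is a consistency check: one verifies $(m+n)\cdot W=0$ using $(m+n)\cdot(\alpha\times m)=\alpha\cdot(m\times n)$ and $(m+n)\cdot(\beta\times n)=\beta\cdot(n\times m)$, which cancel against $C_1,C_2$. Once this check passes, the decomposition lemma applies without the need to handle any $m+n$ component, and the coefficients $\gamma\cdot m/\gamma\cdot(m+n)$ and $\gamma\cdot n/\gamma\cdot(m+n)$ emerge inevitably from the Cramer's-rule formulas of the lemma.
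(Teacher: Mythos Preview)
Your strategy is correct but takes a different route from the paper. The paper does not invoke the decomposition lemma for the mixed bracket; instead it first proves a separate vector identity
\[
\alpha\cdot(m\times n)\bigl\{\gamma\cdot n\,\beta\times m-\gamma\cdot m\,\beta\times n\bigr\}-\beta\cdot(m\times n)\bigl\{\gamma\cdot n\,\alpha\times m-\gamma\cdot m\,\alpha\times n\bigr\}=0,
\]
then writes each of $a_m\cdot\nabla b_n$ and $b_n\cdot\nabla a_m$ individually as the corresponding piece of the claimed answer \emph{plus a remainder}, and uses the identity to show the two remainders cancel in the difference. Your method --- compute the raw vector $W$, check $(m+n)\cdot W=0$, and then project via the already-established decomposition lemma by taking $\alpha\cdot W$ and $\beta\cdot W$ --- is more economical, since it reuses existing machinery rather than proving a new identity; the paper's version isolates the cancellation as an explicit stand-alone algebraic fact, which is more self-contained but more work.

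One bookkeeping slip to flag: with your conventions (factoring $[a_m,b_n]=-iW\,e_{m+n}$), the coefficient of $\beta\times n$ in $W$ comes out as $+\alpha\cdot(m\times n)$, not $-\alpha\cdot(m\times n)$. With your stated $C_1,C_2$ the consistency check $(m+n)\cdot W=0$ in fact gives $2\,\alpha\cdot(m\times n)\,\beta\cdot(m\times n)$, so correct that sign before applying the lemma; the final coefficients then emerge exactly as you describe.
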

\begin{equation}
[a_{m},a_{n}]=\alpha\cdot(m\times n)a_{m+n},\quad[b_{m},b_{n}]=\beta\cdot(m\times n)b_{m+n}\label{eq:aabb-1-1}
\end{equation}
\begin{equation}
[a_{m},b_{n}]=\frac{\gamma.m}{\gamma\cdot(m+n)}\beta\cdot(m\times n)a_{m+n}+\frac{\gamma\cdot n}{\gamma\cdot(m+n)}\alpha\cdot(m\times n)b_{m+n}\label{ab-1}
\end{equation}

It is straightforward to verify (\ref{eq:aabb-1-1}). To prove (\ref{ab-1})
we need the lemma
\begin{lem*}
When $\gamma=\alpha\times\beta$, we have the identity 
\begin{equation}
\alpha.(m\times n)\left\{ \gamma.n\beta\times m-\gamma.m\beta\times n\right\} -\beta.(m\times n)\left\{ \gamma.n\alpha\times m-\gamma.m\alpha\times n\right\} =0
\end{equation}
\end{lem*}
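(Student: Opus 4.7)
The plan is to reduce this vector identity to a double application of the BAC-CAB rule. First I would factor the left-hand side by separating the coefficients involving $\alpha\cdot(m\times n)$ and $\beta\cdot(m\times n)$ from the cross products with $m$ and $n$. Grouping by $\gamma\cdot m$ and $\gamma\cdot n$, the expression becomes
\begin{equation}
(\gamma\cdot n)\, V\times m \;-\; (\gamma\cdot m)\, V\times n,
\end{equation}
where $V := \alpha\cdot(m\times n)\,\beta - \beta\cdot(m\times n)\,\alpha$.

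The key observation is that $V$ is itself a nested cross product in disguise. Applying the identity $(a\times b)\times c = (a\cdot c)b - (b\cdot c)a$ with $a=\alpha$, $b=\beta$, $c=m\times n$ gives
\begin{equation}
V = (\alpha\times\beta)\times(m\times n) = \gamma\times(m\times n).
\end{equation}
A second application of BAC-CAB on $\gamma\times(m\times n)$ then yields the concrete expansion $V = (\gamma\cdot n)\,m - (\gamma\cdot m)\,n$.

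Once $V$ is written in this form, computing $V\times m$ and $V\times n$ is immediate, since the terms $m\times m$ and $n\times n$ drop out: one finds $V\times m = (\gamma\cdot m)(m\times n)$ and $V\times n = (\gamma\cdot n)(m\times n)$. Substituting back into the grouped expression produces $(\gamma\cdot n)(\gamma\cdot m)(m\times n) - (\gamma\cdot m)(\gamma\cdot n)(m\times n) = 0$, establishing the identity.

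The only mild obstacle is recognizing the BAC-CAB structure hidden in $V$; everything after that is mechanical. Since the identity is purely algebraic in $\alpha,\beta,\gamma,m,n$ with $\gamma=\alpha\times\beta$, no appeal to the no-resonance condition or to Fourier analysis is needed at this stage — the lemma is a geometric statement about vectors in $\mathbb{R}^3$ that then feeds into the proof of the structure constants~(\ref{ab-1}).
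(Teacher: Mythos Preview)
Your proof is correct and, in fact, cleaner than the paper's. The paper proceeds by components: it takes the left-hand side, dots it with $\alpha$, $\beta$, and $\gamma$ separately, and shows each scalar vanishes. The $\alpha$- and $\beta$-components are easy (using $\alpha\cdot(\beta\times m)=\gamma\cdot m$ and $\beta\cdot(\beta\times m)=0$, etc.), and the $\gamma$-component is handled with the Binet--Cauchy-type identity $u\cdot m\,w\cdot n - w\cdot m\,u\cdot n = (u\times w)\cdot(m\times n)$, which reduces it to $\gamma\cdot\gamma\,[\alpha\cdot(m\times n)\,\beta\cdot(m\times n) - \beta\cdot(m\times n)\,\alpha\cdot(m\times n)]=0$. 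Your route avoids choosing a basis altogether: by regrouping so that the auxiliary vector $V=\alpha\cdot(m\times n)\,\beta-\beta\cdot(m\times n)\,\alpha$ appears, and then recognizing $V=(\alpha\times\beta)\times(m\times n)=\gamma\times(m\times n)=(\gamma\cdot n)m-(\gamma\cdot m)n$ via two applications of the vector triple product, you get the cancellation in one stroke. The advantage of your approach is brevity and manifest coordinate-independence; the paper's component-wise check is perhaps more pedestrian but makes explicit why the $\alpha$- and $\beta$-directions are trivially satisfied, which is mildly informative about the structure of the identity.
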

\begin{proof}
Since $\alpha\times\beta=\gamma$ and $\alpha.(\beta\times m)=(\alpha\times\beta).m=\gamma.m$
we have 

\begin{equation}
\alpha.\left\{ \gamma.n\beta\times m-\gamma.m\beta\times n\right\} =0,\quad\beta.\left\{ \gamma.n\alpha\times m-\gamma.m\alpha\times n\right\} =0
\end{equation}

Of course also $\beta.(\beta\times m)=0$ so that 

\begin{equation}
\beta.\left\{ \gamma.n\beta\times m-\gamma.m\beta\times n\right\} =0,\quad\alpha.\left\{ \gamma.n\alpha\times m-\gamma.m\alpha\times n\right\} =0
\end{equation}

Since $\left\{ \alpha,\beta,\gamma\right\} $ is a basis, and we have
proved that the $\alpha$ and $\beta$ components are zero, it is
enough to prove that the $\gamma-$ component is zero as well. Now
recall that 

\begin{equation}
u.mw.n-w.m\ u.n=\left(u\times w\right).\left(m\times n\right)
\end{equation}

so that 

\begin{equation}
\gamma.\left\{ \gamma.n\beta\times m-\gamma.m\beta\times n\right\} =\gamma.n(\gamma\times\beta).m-\gamma.m(\gamma\times\beta).n
\end{equation}

\begin{equation}
=-(\gamma\times(\gamma\times\beta)).(m\times n)
\end{equation}

\begin{equation}
=\gamma.\gamma\ \beta.(m\times n)
\end{equation}

Similarly

\begin{equation}
\gamma.\left\{ \gamma.n\alpha\times m-\gamma.m\alpha\times n\right\} =\gamma.\gamma\alpha.(m\times n)
\end{equation}

Thus

\begin{equation}
\gamma.\left[\alpha.(m\times n)\left\{ \gamma.n\beta\times m-\gamma.m\beta\times n\right\} -\beta.(m\times n)\left\{ \gamma.n\alpha\times m-\gamma.m\alpha\times n\right\} \right]=
\end{equation}

\begin{equation}
=\gamma.\gamma\left[\alpha.(m\times n)\beta.(m\times n)-\beta.(m\times n)\alpha.(m\times n)\right]=0
\end{equation}

as needed.
\end{proof}
Now we can prove (\ref{ab-1})
\begin{proof}
Start with 

\begin{equation}
a_{m}=-i\alpha\times me_{m},\quad b_{n}=-i\beta\times ne_{n}
\end{equation}

Then 

\begin{equation}
a_{m}.\nabla b_{n}=-i\left(\alpha\times m\right).n\ \beta\times n\ e_{m+n}
\end{equation}

\begin{equation}
=\frac{\gamma\cdot n}{\gamma\cdot(m+n)}\alpha\cdot(m\times n)b_{m+n}+\frac{\alpha.(m\times n)}{\gamma.(m+n)}\left\{ \gamma.n\beta\times(m+n)-\beta\times n\gamma.(m+n)\right\} ie_{m+n}
\end{equation}

\begin{equation}
a_{m}.\nabla b_{n}=\frac{\gamma\cdot n}{\gamma\cdot(m+n)}\alpha\cdot(m\times n)b_{m+n}+\frac{\alpha.(m\times n)}{\gamma.(m+n)}\left\{ \gamma.n\beta\times m-\gamma.m\beta\times n\right\} ie_{m+n}
\end{equation}

Similarly

\begin{equation}
b_{n}.\nabla a_{n}=\frac{\gamma\cdot m}{\gamma\cdot(m+n)}\beta\cdot(n\times m)a_{m+n}+\frac{\beta.(n\times m)}{\gamma.(m+n)}\left\{ \gamma.m\alpha\times n-\gamma.n\alpha\times m\right\} ie_{m+n}
\end{equation}

so that 

\begin{equation}
[a_{m},b_{n}]=\frac{\gamma\cdot n}{\gamma\cdot(m+n)}\alpha\cdot(m\times n)b_{m+n}+\frac{\gamma\cdot m}{\gamma\cdot(m+n)}\beta\cdot(m\times n)a_{m+n}
\end{equation}
\begin{equation}
+\left[\alpha.(m\times n)\left\{ \gamma.n\beta\times m-\gamma.m\beta\times n\right\} -\beta.(m\times n)\left\{ \gamma.n\alpha\times m-\gamma.m\alpha\times n\right\} \right]\frac{ie_{m+n}}{\gamma.(m+n)}
\end{equation}
The last term is zero by the Lemma above.
\end{proof}

\subsection{Symmetric Version of the Commutation relations}

\begin{equation}
[X^{m},X^{n}]=\frac{1}{(i\gamma.m)(i\gamma.n)}[b_{-m},b_{-n}]=\frac{\beta.(m\times n)}{(i\gamma.m)(i\gamma.n)}b_{-m-n}
\end{equation}
\begin{equation}
=\frac{\beta.(m\times n)(-i\gamma.(m+n))}{(i\gamma.m)(i\gamma.n)}X^{m+n}
\end{equation}
Thus
\begin{equation}
\mu_{k}^{mn}\equiv\delta_{k}^{m+n}\frac{(-i\gamma.k)}{(i\gamma.m)(i\gamma.n)}\beta.(m\times n)
\end{equation}
Finally,

\begin{equation}
[X_{m},X^{n}]=-\frac{1}{i\gamma.n}[a_{m},b_{-n}]
\end{equation}

\begin{equation}
=-\frac{\gamma.m}{i\gamma.n\gamma\cdot(m-n)}\beta\cdot(m\times n)a_{m-n}+\frac{\gamma\cdot n}{i\gamma.n\gamma\cdot(m-n)}\alpha\cdot(m\times n)b_{m-n}
\end{equation}
\begin{equation}
=-\frac{\gamma.m}{i\gamma.n\gamma\cdot(m-n)}\beta\cdot(m\times n)a_{m-n}+\alpha\cdot(m\times n)X^{n-m}
\end{equation}

Now

\begin{equation}
\Gamma_{mk}^{n}X^{k}=\delta_{k+m}^{n}\alpha.(m\times k)X^{k}=\alpha.(m\times n)X^{n-m}
\end{equation}
\begin{equation}
\mu_{m}^{nk}X_{k}=\delta_{m}^{n+k}\frac{(-i\gamma.m)}{i\gamma.n\ i\gamma.k}\beta.(n\times k)X_{k}=\frac{(-i\gamma.m)}{i\gamma.n\ i\gamma.(m-n)}\beta.(n\times m)X_{m-n}
\end{equation}
\begin{equation}
=-\frac{(-i\gamma.m)}{i\gamma.n\ i\gamma.(m-n)}\beta.(m\times n)X_{m-n}
\end{equation}
\begin{equation}
=\frac{\gamma.m}{i\gamma.n\ \gamma.(m-n)}\beta.(m\times n)X_{m-n}
\end{equation}
Thus
\begin{equation}
[X_{m},X^{n}]=\Gamma_{mk}^{n}X^{k}-\mu_{m}^{nk}X_{k}
\end{equation}

\subsection{Proof that $\mu$ is not a co-boundary}

The structure constants $\Gamma,\mu$ of a Lie bi-algebra satisfy the identities 
\[
\Gamma_{ab}^{c}=-\Gamma_{ba}^{c},\quad\Gamma_{ab}^{d}\Gamma_{dc}^{e}+\Gamma_{bc}^{d}\Gamma_{da}^{e}+\Gamma_{ca}^{d}\Gamma_{db}^{e}=0
\]
\[
\mu_{d}^{bc}\Gamma_{ae}^{d}=\left[\Gamma_{ad}^{b}\mu_{e}^{dc}+\mu_{a}^{bd}\Gamma_{de}^{c}\right]-b\leftrightarrow c
\]
which are the just Jacobi identities for the Lie sub-algebras $\mathcal{A},\mathcal{B}$ spanned by $X_a$ and $X^a$ respectively. The mixed Jacobi identitities $[X_a,X_b,X^c]]$ and $[X^a,X^b,X_c]$ both lead to the condition
\[
\mu_{d}^{bc}\Gamma_{ae}^{d}=\left[\Gamma_{ad}^{b}\mu_{e}^{dc}+\mu_{a}^{bd}\Gamma_{de}^{c}\right]-b\leftrightarrow c
\]
This has another meaning: it says that $\mu$ is a co-cycle in the Lie algebra cohomogy $H^1(\mathcal{A},\mathcal{A}\otimes \mathcal{A})$ (or conversely, $\Gamma$ co-cycle in $H^1(\mathcal{B},\mathcal{B}\otimes\mathcal{B}))$.

If $\mathcal{A}$ is a finite dimensional simple algebra (such as $sl_2$) this is a co-boundary; i.e., there is an $r\in \mathcal{A}\otimes \mathcal{A}$  such that $\mu=\partial r $. The Jacobi identity of $\mu$ then becomes  a quadratic condition on $r$ called the classical Yang-Baxter equation. This is the infinitesimal version of the famous Yang-Baxter equation of a quasi-triangular quantum group, which is the exponential of such a co-boundary Lie bi-algebra.

It would have  been a simplification if such an $r-$ matrix existed for our Lie bi-algebra; being infinite dimensional, the usual arguments for its existence do not apply. A direct study is needed. We will now that 

\begin{prop}
For the Lie bi-algebra $\hat{\mathcal{S}}$ the structure constants $\mu$ are not a co-boundary; there is no classical $r-$ matrix arising from it.
\end{prop}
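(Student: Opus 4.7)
The plan is to extract from $\mu = \partial r$ a single necessary identity on a hypothetical $r \in \mathcal{A}\otimes\mathcal{A}$ and then exhibit a quadruple of Fourier indices at which that identity already fails. Writing $r = \sum r^{pq}\, a_p \otimes a_q + (\text{terms involving the } c_j)$ and $(\partial r)(X_c) = [X_c \otimes 1 + 1\otimes X_c,\,r]$ gives the standard formula $\mu^{ab}_c = r^{pb}\Gamma^a_{cp} + r^{ap}\Gamma^b_{cp}$. Because $[a_{m+n}, c_j] = 0$, every term of $r$ containing a central generator drops out when $c = m+n$; and because the subalgebra $\mathcal{A}' \subset \mathcal{A}$ spanned by the $a_m$'s is graded by $\mathbb{Z}^3$, only the ``anti-diagonal'' components $f(p) := r^{p,-p}$ and $g(p) := r^{-p,p}$ can contribute to the coefficient of $a_m\otimes a_n$ on the left-hand side.

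Carrying out this reduction with $\Gamma^k_{pq} = \delta^k_{p+q}\,\alpha\cdot(p\times q)$ and the explicit $\mu$ from the previous subsection produces the single constraint
\begin{equation}
\alpha\cdot(m\times n)\,\bigl[f(m) - g(n)\bigr] = \frac{i\,\gamma\cdot(m+n)}{\gamma\cdot m\ \gamma\cdot n}\,\beta\cdot(m\times n),
\end{equation}
valid for all $m, n \in \mathbb{Z}'^{3}$ with $m + n \neq 0$ and $m\not\parallel n$. Dividing through and using $\gamma\cdot(m+n) = \gamma\cdot m + \gamma\cdot n$, this reads $f(m) - g(n) = h(m,n)$ with
\begin{equation}
h(m,n) = i\left(\frac{1}{\gamma\cdot m} + \frac{1}{\gamma\cdot n}\right)\frac{\beta\cdot(m\times n)}{\alpha\cdot(m\times n)}.
\end{equation}
Such $f$ and $g$ exist if and only if the mixed second difference $h(m,n) - h(m',n) - h(m,n') + h(m',n')$ vanishes for every admissible quadruple.

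I would check the obstruction at the cheapest test point $(m,m',n,n') = (e_1, 2e_1, e_2, e_3)$. Since $\beta\cdot(m\times n)/\alpha\cdot(m\times n)$ depends only on the direction of $m\times n$, the four required values of this ratio collapse to $\beta_3/\alpha_3$ (for $n = e_2$) and $\beta_2/\alpha_2$ (for $n = e_3$), and the mixed difference reduces to
\begin{equation}
\frac{i}{2\,\gamma\cdot e_1}\left(\frac{\beta_3}{\alpha_3} - \frac{\beta_2}{\alpha_2}\right).
\end{equation}
For $\alpha = (1, \sqrt 2, \sqrt 3)$ and $\beta = (\sqrt 3, 1, \sqrt 2)$ this is a nonzero multiple of $\sqrt{2/3} - 1/\sqrt 2$, contradicting the assumed existence of $r$.

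The main obstacle is conceptual rather than computational: one has to recognize that the non-separability of the direction factor $\beta\cdot(m\times n)/\alpha\cdot(m\times n)$ in the pair $(m,n)$ is the true source of the obstruction, and then use the $\mathbb{Z}^3$-grading of $\mathcal{A}'$ together with $[a_m,c_j]=0$ to reduce an infinite-dimensional unknown $r$ to a one-variable problem in $f$ and $g$. Once that reduction is in place, the failure of the coboundary condition is visible at essentially the first test point one tries.
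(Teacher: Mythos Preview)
Your argument is correct and follows the same overall strategy as the paper's: reduce the coboundary equation $\mu=\partial r$ to a scalar separability condition of the form $h(m,n)=f(m)-g(n)$ on the anti-diagonal entries of $r$, and then exhibit indices at which separability fails. The paper arrives at the same scalar condition (phrased as $M_{mn}=\rho^m+\rho^n$), but then tests it by a rank argument, explicitly computing a $3\times3$ submatrix of $M$ at rather large Fourier modes and checking that it has rank $3$. Your mixed second-difference test is the equivalent but more transparent obstruction, and your quadruple $(e_1,2e_1,e_2,e_3)$ collapses the computation to the single nonzero number $\tfrac{i}{2\gamma_1}\bigl(\beta_3/\alpha_3-\beta_2/\alpha_2\bigr)$, which is much cleaner. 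You are also slightly more careful than the paper in one respect: the paper introduces the anti-diagonal form of $\rho$ as an ``ansatz,'' whereas you justify it directly from the $\mathbb{Z}^3$-grading of $\mathcal{A}'$ and the vanishing of $[a_k,c_j]$, so no assumption on the shape of $r$ is needed.
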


\begin{proof}
We have 
\[
\mu(X_{k})=\sum_{m+n=k}\frac{(-i\gamma.k)}{(i\gamma.m)(i\gamma.n)}\beta.(m\times n)X_{m}\otimes X_{n}
\]

Let us calculate the co-boundary of an element $\rho=\rho^{pq}X_{p}\otimes X_{q}\in\mathcal{A}\otimes\mathcal{A}$. 

\[
\partial\rho(X_{k})=\left[X_{k}\otimes1+1\otimes X_{k},\rho\right]
\]

\[
=\left[X_{k}\otimes1+1\otimes X_{k},\rho^{pq}X_{p}\otimes X_{q}\right]
\]

\[
=\sum_{pq}\rho^{pq}\left\{ \left[X_{k},X_{p}\right]\otimes X_{q}+X_{p}\otimes[X_{k},X_{q}]\right\} 
\]

\[
=\sum_{pq}\rho^{pq}\left\{ \alpha.(k\times p)X_{k+p}\otimes X_{q}+\alpha.(k\times q)X_{p}\otimes X_{k+q}\right\} 
\]

\[
=\sum_{pq}\rho^{pq}\left\{ \alpha.(k\times p)X_{k+p}\otimes X_{q}+\alpha.(k\times q)X_{p}\otimes X_{k+q}\right\} 
\]

Replace $p\mapsto p-k,q\mapsto q+k$ in the first term 

\[
=\sum_{pq}\left\{ \rho^{p-k,q+k}\alpha.(k\times p)X_{p}\otimes X_{q+k}+\rho^{pq}\alpha.(k\times q)X_{p}\otimes X_{k+q}\right\} 
\]

\[
=\sum_{pq}\left\{ \rho^{p-k,q+k}\alpha.(k\times p)+\rho^{pq}\alpha.(k\times q)\right\} X_{p}\otimes X_{q+k}
\]

Put $m=p,n=q+k$

\[
\partial\rho(X_{k})=\sum_{mn}\left\{ \rho^{m-k,n}\alpha.(k\times m)+\rho^{m,n-k}\alpha.(k\times n)\right\} X_{m}\otimes X_{n}
\]

Compare with 
\[
\mu(X_{k})=\sum_{m,n}\delta_{k}^{m+n}\frac{(-i\gamma.k)}{(i\gamma.m)(i\gamma.n)}\beta.(m\times n)X_{m}\otimes X_{n}
\]

\[
\delta_{k}^{m+n}\frac{(-i\gamma.k)}{(i\gamma.m)(i\gamma.n)}\beta.(m\times n)=\rho^{m-k,n}\alpha.(k\times m)+\rho^{m,n-k}\alpha.(k\times n)
\]

So rhs must vanish if $k\neq m+n$. This suggests the ansatz 

\[
\rho^{mn}=\delta(m+n)\rho^{m}
\]

\[
\delta_{k}^{m+n}\frac{(-i\gamma.k)}{(i\gamma.m)(i\gamma.n)}\beta.(m\times n)=\left\{ -\rho^{-n}\alpha.(n\times m)-\rho^{m}\alpha.(m\times n)\right\} \delta(k+m+n)
\]

\[
\frac{(-i\gamma.(m+n))}{(i\gamma.m)(i\gamma.n)}\beta.(m\times n)=\left\{ \rho^{m}+\rho^{n}\right\} \alpha.(m\times n)
\]

Let 

\[
M_{mn}=\frac{(-i\gamma.(m+n))}{(i\gamma.m)(i\gamma.n)}\frac{\beta.(m\times n)}{\alpha.(m\times n)}
\]

If $\mu$ is a co-boundary,  this should be equal to $\sigma^{mn}\equiv \rho^{m}+\rho^{n}$. But this is impossible. 

For,   $\sigma^{mn}$ is a rank two matrix, being of  the form $\rho\otimes \xi+\xi\otimes \rho $ where $\xi$ is the vector all of whose components are equal to one. 
It is easy to check that $M$ has sub-matrices of rank higher than two. For example, 
we can very directly that (with the choice $\alpha=(1,\sqrt{2},\sqrt{3}),\beta=(\sqrt{3},1,\sqrt{2})$ given in the text )  the sub-matrix $M_1\subset M$  labeled by

\[
m=\{(3,2,3),(4,3,4),(4,1,1)\},\quad n=\{(4, 3, 2), (3, 4,2), (2, 4, 3)\}
\]
\[
M_1=i\left(
\begin{array}{ccc}
 \frac{-79 \sqrt{2}+222 \sqrt{3}+12 \sqrt{6}-299}{-2149 \sqrt{2}+16 \sqrt{3}+225 \sqrt{6}+2466} & \frac{312 \sqrt{2}-358 \sqrt{3}-3 \sqrt{6}+177}{281 \sqrt{2}+2426 \sqrt{3}+27 \sqrt{6}-4676} & \frac{398 \sqrt{2}-285 \sqrt{3}+14 \sqrt{6}-108}{-2865 \sqrt{2}+3806 \sqrt{3}+245 \sqrt{6}-3144} \\
 \frac{2 \left(-15 \sqrt{2}+76 \sqrt{3}+3 \sqrt{6}-116\right)}{-2196 \sqrt{2}+380 \sqrt{3}+214 \sqrt{6}+1923} & \frac{-439 \sqrt{2}+522 \sqrt{3}+3 \sqrt{6}-276}{-915 \sqrt{2}-3910 \sqrt{3}+14 \sqrt{6}+8031} & \frac{575 \sqrt{2}-396 \sqrt{3}+17 \sqrt{6}-174}{-5019 \sqrt{2}+6602 \sqrt{3}+281 \sqrt{6}-5025} \\
 \frac{-273 \sqrt{2}+47 \sqrt{3}+48 \sqrt{6}+164}{1805 \sqrt{2}-2682 \sqrt{3}-331 \sqrt{6}+2878} & \frac{-459 \sqrt{2}+107 \sqrt{3}+66 \sqrt{6}+248}{2385 \sqrt{2}-4306 \sqrt{3}-115 \sqrt{6}+4298} & \frac{-496 \sqrt{2}+83 \sqrt{3}+39 \sqrt{6}+432}{4309 \sqrt{2}-4718 \sqrt{3}-547 \sqrt{6}+3390} \\
\end{array}
\right)
\]
is of rank $3$. In fact, we expect that $M$ is of infinite rank.

\end{proof}


\begin{thebibliography}{10} 



\bibitem{Kac} V. G. Kac, \emph{Infinite Dimensional Lie Algebras}, Cambridge University
Press (1994)

\bibitem{RajeevFluid} S. G. Rajeev {\em Fluid Mechanics}, Oxford (2018)


\bibitem{BowickRajeev}M. J. Bowick and S. G. Rajeev, Phys. Rev. Lett. 58, 535

\bibitem{FQS} D. Friedan, Z. Qiu, and S.  Shenker
Phys. Rev. Lett. 52, 1575,  (1984)


\bibitem{BPZ} A.A. Belavin, A.M. Polyakov, A.B. Zamolodchikov, Nucl Phys. B241, 333 (1984).

\bibitem{Arnold} V. I. Arnold, Ann Inst. Grenoble, 16, 319  (1966)

\bibitem{Helicity} H. K. Moffatt, J. Fluid Mech. 35,  117,  (1969)

\bibitem{ArnoldKhesin} V. I. Arnold and B. Khesin,
\emph{Topological Methods in
Hydrodynamics}, Springer (1999)

\bibitem{CentralExtension} E. Floratos and J. Iliopoulos Phys. Lett. B 201 (1988) 237


\bibitem{McLachlan}  R. I. McLachlan, Phys. Rev. Lett., 71, 3043 (1993); Numer. Math., 66  465(1994).

\bibitem{RajeevMechanics} S. G. Rajeev, {\em Advanced Mechanics}, Oxford (2013)

\bibitem{QuantumGroupBooks}C. Kessel \emph{Quantum Groups}, Springer (1994); V. Chari and
A. Pressley
\emph{A Guide to Quantum Groups}, Cambridge University Press (1995); O. Schiffman and P. Etingof  \emph{Lectures on Quantum Groups}  International Press of Boston (2001)


\bibitem{Ogievetsky}O. Ogievetsky, \emph{Uses of Quantum Groups}, in \emph{Quantum
symmetries in theoretical
physics and mathematics}. Contemporary Mathematics, 294. American Mathematical Society,
Providence, RI,
2002.xxvi+266 pp. ISBN: 0-8218-2655-7

\end{thebibliography}
\end{document}